\newcommand{\beq}{\begin{equation}}
\newcommand{\enq}{\end{equation}}
\newcommand{\ben}{\begin{eqnarray}}
\newcommand{\enn}{\end{eqnarray}}
\newcommand{\bei}{\begin{itemize}}
\newcommand{\eni}{\end{itemize}}
\newcommand{\bm}[1]{\mbox{\boldmath{$#1$}}}
\newtheorem{theorem}{Theorem}
\newtheorem{remark}{Remark}
\newcommand{\figcaption}{\def\@captype{figure}\caption}
\newcommand{\tabcaption}{\def\@captype{table}\caption}
\date{}
\title{
\vspace{-1.0cm}
   \hfill{\em\small{}}\\
   \vspace{0.6cm} \LARGE
\begin{center}
{Achieving Optimality in Robust Joint Optimization of Linear Transceiver Design}\end{center}
}
\author{Hongying~Tang, Wen~Chen,~\IEEEmembership{Senior Member,~IEEE},\\ Jun Li, ~\IEEEmembership{Member,~IEEE} 
\thanks{Copyright (c) 2015 IEEE. Personal use of this material is permitted. However, permission to use this material for any other purposes must be obtained from the IEEE by sending a request to pubs-permissions@ieee.org.}
\thanks{Hongying~Tang and Wen~Chen are with the Department of Electronic Engineering,
Shanghai Jiaotong University, Shanghai, and School of Electronic Engineering and Automation, Guilin University of Electronic Technology, Guilin, China. (e-mail: \{lojordan, wenchen\}@sjtu.edu.cn). Jun Li is with school of Eletrical and Information Engineering, University
of Sydney, Australia. Email: jun.li1@sydney.edu.au.}
\thanks{This work is supported by the National 973 Project \#2012CB316106, NSF China \#61328101, by STCSM Science and
Technology Innovation Program \#13510711200, and by SEU National Key
Lab on Mobile Communications \#2013D11.}}
\begin{document}

\maketitle

\begin{abstract}
This paper presents new results on linear transceiver designs in a multiple-input-multiple-output (MIMO) link. By considering the minimal total mean-square error (MSE) criterion, we  prove that the robust optimal linear  transceiver design has  a channel-diagonalizing structure, which verifies the conjecture in the previous work \cite{JW_2011}. Based on this property,  the original design problem can be transformed into a scalar problem, whose global optimal solution is first obtained in this work. Simulation results show the performance advantages of our solution over the existing schemes.

\end{abstract}

\begin{IEEEkeywords}
Robust design, mean-square error (MSE), linear transceiver design, convex optimization
\end{IEEEkeywords}
\IEEEpeerreviewmaketitle


\section{Introduction}\label{sec:1}
MIMO technique has attracted a considerable interest from both academic and industrial fields in recent years. By exploiting the multiplexing and diversity property, it can significantly improve
the spectral efficiency and link reliability of the system \cite{wang}.
In the literatures, transceiver designs in MIMO systems have been extensively studied in \cite{EB_color, EB_poisson, wang, perfect_mmse, Joint_Tx_Rx, XZ_Pre-fixed,JW_Pre-fixed, JW_2011, MD_MSE, NK_MMSE_Selection, HS_Worst_MR,HS_MMSE,thywork}. One approach of the designs is to allow nonlinear process at the transmitter or the receiver, such as the
successive
interference cancelation receiver design discussed by \cite{wang}, or the Maximum Likelihood detector investigated in \cite{EB_poisson, EB_color}.

As an alternative approach, the linear transceiver design, which only allows linear matrix multiplication of the signal,  is more preferable in a practical system due to low implementation complexity, and is the focus of this paper.
In \cite{perfect_mmse}, the joint optimal linear transceiver design problem was addressed,  and a closed-form solution was derived. Their result was generalized  into the multicarrier MIMO system in \cite{Joint_Tx_Rx}, by developing a unified optimization framework.
The aforementioned works \cite{perfect_mmse, Joint_Tx_Rx} enjoy a common favorable feature that the transceiver processing matrix parallelized the original channel and allocated power to each data stream. In light of the optimality of this \emph{channel-diagonalizing} structure in the perfect channel state information (CSI) case, one may wonder whether the same property holds for the robust design in the imperfect CSI case.

Robust design, which aims to reduce the sensitivity of the imperfect CSI  to the system performance, has attracted much attention \cite{XZ_Pre-fixed,JW_Pre-fixed, JW_2011, MD_MSE}. Generally, there are two widely used CSI uncertainty  models in the literature: the stochastic model and the deterministic model.
For the statistical CSI uncertainty model, where the distribution of CSI uncertainties is assumed to be known,  this channel-diagonalizing structure has been well established in MIMO channels \cite{XZ_Pre-fixed, MD_MSE}. However, for the deterministic CSI uncertainty model, which assumes that the instantaneous value of CSI error is norm-bounded, this problem remains unsolved, and only some restricted results were obtained in \cite{JW_Pre-fixed, JW_2011}.
The authors in \cite{JW_Pre-fixed} proposed a semi-robust scheme, by optimizing only the transmit processing matrix with some fixed equalizer. Obviously, this scheme cannot fully exploit the performance gain by the equalizer, since the fixed equalizer may not be optimal. Later in \cite{JW_2011},  the authors considered joint linear transceiver design, and showed a superior performance over \cite{JW_Pre-fixed}.  By imposing certain structural constraints on the processing matrix at the transmitter or receiver side,
they  observed the favorable channel-diagonalizing structure. Then they transformed the original problem into the issues of power loading among each data stream, which were further solved by the alternation optimization method.  However, two problems in \cite{JW_2011} were left unsolved:
{\begin{itemize}
         \item [$\mathcal Q1$)] \emph{Joint Optimal structure}:  Without any additional structural restriction,  is this channel-diagonalizing structure joint optimal?
         \item [$\mathcal Q2$)] \emph{Global Optimal solution}:  If it is, does the alternating-optimization based method converge to the global optimal solution?
\end{itemize}

In this paper, we will answer the above two questions raised by \cite{JW_2011}. Without assuming any specific structure for the linear transmitter-equalizer matrix, we show that the optimal design actually admits a channel-diagonalizing structure. Based on this property, the original problem reduces to a scalar convex problem, whose optimal solution can thus be efficiently obtained. Simulation results in section~\ref{sec:simu} show the superior performance of our solution over that in \cite{JW_2011}.

\emph{Notations}: $[\cdot]^H$ denotes conjugate transpose of a matrix or a vector. $\mathbf I$ and $\mathbf 0$ denote the identity and zero matrix, respectively. $\mathbb R^N$ and $\mathbb C^N$ respectively denote the $N$ dimensional real field and complex field. $||\cdot||_2$ and $\|\cdot\|_F$  denote the the Frobenius norm of a vector and a matrix, respectively. We will use boldface lowercase letters to denote column vectors and boldface uppercase letters to denote matrices. The positive semidefinite matrix $\mathbf X$ is denoted by $\mathbf X\succeq 0$.  diag$\{\mathbf x_1, \cdots, \mathbf x_R\}$ denotes diagonal concatenation of block matrices $\mathbf x_1, \cdots, \mathbf x_R$. The tr$(\cdot$) is the trace of a matrix. vec$(\mathbf X)$ stacks the columns of matrix $\mathbf X$ into a vector. $\otimes$ denotes the Kronecker product. $\mathfrak R\{\cdot\}$ denotes the real part of a complex number. $\lambda_{\max}(\cdot)$ is the maximum eigenvalue of a matrix.


\section{Problem Statement}\label{sec:2}
We consider  a MIMO communication system equipped with $N$ transmit antennas at the source and $M$ receive antennas at the destination. The symbol vector $\mathbf s\in \mathbb C^L$ is linearly precoded by a source precoding matrix $\mathbf F\in \mathbb C^{N\times L}$, through the MIMO channel $\mathbf H\in \mathbb C^{M\times N}$, and then received by the destination. We assume that $E\{\mathbf s\mathbf s^H\}=\mathbf I$ without loss of generality. Generally, the transmitter imposes a power constraint on the precoding matrix $\mathbf F$ as $\text{tr}(\mathbf F\mathbf F^H)\leq P$. A linear equalizer $\mathbf G\in \mathbb C^{L\times M}$ is usually applied on the received signal to obtain the estimated symbol vector $\hat{\mathbf s}$ as
\ben
\hat{\mathbf s}=\mathbf G\mathbf H\mathbf F\mathbf s+\mathbf G\mathbf n,\nonumber
\enn
where $\mathbf n\in \mathbb C^M$ is the additive white Gaussian noise (AWGN) observed at the destination with variance $\sigma_n^2\mathbf I$. Then the MSE between $\hat{\mathbf s}$ and $\mathbf s$ is given by
\ben
\text{MSE}\triangleq \mathcal E\{\|\hat{\mathbf s}-\mathbf s\|^2\}=\|\mathbf G\mathbf H\mathbf F-\mathbf I\|_F^2+\sigma_n^2\|\mathbf G\|_F^2.\nonumber
\enn
We assume that $L\leq \text{rank}(\mathbf H)$, since the number of degrees of freedom is upper bounded by $L\leq \text{rank}(\mathbf H)=\min\{M, N\}$.

In a practical wireless communication scenario, perfect CSI is usually difficult to obtain. With only imperfect CSI, the system performance will be deteriorated. This motivates us to investigate the robust design taking the CSI errors into account. To characterize the mismatched CSI, we adopt a common deterministic imperfect CSI model \cite{JW_2011, JW_Pre-fixed}, and write the channel matrix as
\ben\label{equ:roba1}
\mathbf H=\tilde {\mathbf H}+\mathbf E,
\enn
where $\tilde {\mathbf H}$ is the estimated channel matrix and $\mathbf E$ is the corresponding CSI error matrix satisfying $\|\mathbf E\|_F\leq \varepsilon$ for some $\varepsilon\geq 0$. As in \cite{JW_2011, JW_Pre-fixed}, we assume that only $\tilde {\mathbf H}$ and $\varepsilon$ are available at both ends.

By taking into imperfect CSI model \eqref{equ:roba1} into account, the robust transmitter-equalizer design is given by the solution of the following min-max problem:
\ben\label{equ:robmse}
\min_{\mathbf G, \mathbf F}\max_{\|\mathbf E\|_F\leq \varepsilon }&&\|\mathbf G(\tilde {\mathbf H}+\mathbf E)\mathbf F-\mathbf I\|_F^2+\sigma_n^2\|\mathbf G\|_F^2,\nonumber\\
\text{s.t.}&&\text{tr}(\mathbf F\mathbf F^H)\leq P.
\enn

\section{Robust Joint Optimal Structure of $\mathbf F$ and $\mathbf G$}\label{sec:opr}
In this section, we will determine the joint optimal structure of $\mathbf F$ and $\mathbf G$ in problem \eqref{equ:robmse}, showing that they diagonalize the MIMO channel into eigen subchannels.  We also figure out that the worst-case CSI uncertainty $\mathbf E$ has the similar singular value decomposition (SVD) structure as the nominal channel $\tilde {\mathbf H}$, which simplifies problem \eqref{equ:robmse} into a scalar problem as we will shown in section~\ref{sec:scalar}.

Denote the SVD  structure of $\mathbf F$ and $\mathbf G$ by $\mathbf F=\mathbf U_f\mathbf \Sigma_f\mathbf V_f^H$ and $\mathbf G=\mathbf U_g\mathbf \Sigma_g\mathbf V_g^H$, respectively, where $\mathbf U_f, \mathbf V_f, \mathbf U_g$ and $\mathbf V_g$ are unitary matrices. The matrices $\mathbf \Sigma_f$ and $\mathbf \Sigma_g$ can be written as
\ben
\mathbf \Sigma_f=[\hat{\mathbf \Sigma}_f,\mathbf 0]^T, \mathbf \Sigma_g=[\hat{\mathbf \Sigma}_g, \mathbf 0],\nonumber
\enn
where $\hat{\mathbf \Sigma}_f\triangleq \text{diag}\{f_1, \cdots, f_L\}$ and $\hat{\mathbf \Sigma}_g\triangleq \text{diag}\{g_1, \cdots, g_L\}$ are real diagonal matrices. Denote the nominal channel $\tilde{\mathbf H}$ by $\tilde{\mathbf H}=\mathbf U_h\mathbf \Sigma_h\mathbf V_h^H$ and let $\hat{\mathbf \Sigma}_h$ be the $L\times L$ diagonal matrix containing the largest $L$ singular values $\gamma_1\geq \cdots \geq \gamma_L$. Then the following theorem determines the optimal structure of $\mathbf F$ and $\mathbf G$.
\begin{theorem}\label{theorem:robmse}
The robust optimal F and G in problem \eqref{equ:robmse} can be expressed in the following structure:
\ben
\mathbf F&=&\mathbf V_h\mathbf {\Sigma}_f,\label{equ:f1}\\
\mathbf G&=&\mathbf {\Sigma}_g\mathbf U_h^H\label{equ:f2}.
\enn
Meanwhile the corresponding worst case channel uncertainty is given by  $\mathbf E=\mathbf U_h\mathbf \Delta_D\mathbf V_h^H$, with $
\mathbf \Delta_D=\text{diag}\{\hat{\mathbf \Delta}_D, \mathbf 0\}$,
and $\hat{\mathbf \Delta}_D\in \mathbb R^{L\times L}$  being diagonal.
\end{theorem}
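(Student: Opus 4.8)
\emph{Strategy.} I would prove the theorem by sandwiching the optimal value of \eqref{equ:robmse} between two matching bounds. The upper bound is immediate: substituting the claimed structure \eqref{equ:f1}--\eqref{equ:f2} into \eqref{equ:robmse} gives a feasible point, and its worst-case MSE --- which, as seen below, is the scalar quantity treated in Section~\ref{sec:scalar} --- over-estimates the optimum. All the work is in the matching lower bound, i.e. that \emph{no} feasible $(\mathbf F,\mathbf G)$ can do better. Before anything else I would normalize away the singular vectors of $\tilde{\mathbf H}$: the substitution $\mathbf F\mapsto\mathbf V_h^H\mathbf F$, $\mathbf G\mapsto\mathbf G\mathbf U_h$, $\mathbf E\mapsto\mathbf U_h^H\mathbf E\mathbf V_h$ leaves the power constraint, both Frobenius norms, and the ball $\{\|\mathbf E\|_F\le\varepsilon\}$ invariant, so it suffices to treat $\tilde{\mathbf H}=\mathbf\Sigma_h$ (rectangular diagonal) and prove that the optimal $\mathbf F,\mathbf G$ are then diagonal up to their trivial zero blocks, with a diagonal worst-case $\mathbf E$.

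\emph{Reduction of the inner max to singular values.} Writing $\mathbf F=\mathbf U_f\mathbf\Sigma_f\mathbf V_f^H$, $\mathbf G=\mathbf U_g\mathbf\Sigma_g\mathbf V_g^H$ and using unitary invariance of $\|\cdot\|_F$ together with the stream-rotation freedom $(\mathbf F,\mathbf G)\mapsto(\mathbf F\mathbf Q,\mathbf Q^H\mathbf G)$ (which leaves every term of \eqref{equ:robmse} untouched), the objective takes the form $\|\mathbf\Sigma_g\mathbf W\mathbf\Sigma_h\mathbf Z\mathbf\Sigma_f+\mathbf\Sigma_g\mathbf E'\mathbf\Sigma_f-\mathbf U_g^H\|_F^2+\sigma_n^2\|\mathbf\Sigma_g\|_F^2$, where $\mathbf W,\mathbf Z,\mathbf U_g$ are the residual unitary mismatches and $\|\mathbf E'\|_F\le\varepsilon$. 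The adversary enters only through the diagonal sandwich $\mathbf\Sigma_g\mathbf E'\mathbf\Sigma_f$, whose entries carry the weights $g_if_j$; lower-bounding the inner max by restricting $\mathbf E'$ to diagonal form already yields a scalar maximization --- maximize $\sum_i(|S_{ii}|+g_if_i|\delta_i|)^2$ over $\sum_i|\delta_i|^2\le\varepsilon^2$, where $\mathbf S=\mathbf\Sigma_g\mathbf W\mathbf\Sigma_h\mathbf Z\mathbf\Sigma_f-\mathbf U_g^H$ --- plus the inert term $\sum_{k\ne\ell}|S_{k\ell}|^2$ and the noise term. Call the resulting value $L(\mathbf W,\mathbf Z,\mathbf U_g,\mathbf\Sigma_f,\mathbf\Sigma_g)$, a lower bound on $\max_{\mathbf E}$ at every feasible point.

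\emph{Outer minimization.} I would then show that $L$ is minimized by the trivial mismatches $\mathbf W=\mathbf Z=\mathbf U_g=\mathbf I$, at which point $\mathbf S$ is diagonal and $L$ coincides with the scalar cost optimized in Section~\ref{sec:scalar}; together with the achievability bound this pins the optimal value of \eqref{equ:robmse} there and forces the optimal $\mathbf F,\mathbf G$ into the diagonal form, i.e. \eqref{equ:f1}--\eqref{equ:f2} after undoing the normalization. Finally, for the diagonalized $(\mathbf F,\mathbf G)$ the inner maximum is genuinely attained by a diagonal $\mathbf E'$: any mass placed off the diagonal can be transferred onto a suitable diagonal position without loss, using the ordering of $\{f_i\},\{g_i\}$ and the nonnegativity of the cross term with the (now diagonal) residual. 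This gives the stated $\mathbf E=\mathbf U_h\mathbf\Delta_D\mathbf V_h^H$ with $\mathbf\Delta_D$ diagonal.

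\emph{Main obstacle.} The crux is the minimization of $L$ over the unitary mismatches. Its ``nominal'' part $\|\mathbf S\|_F^2+\sigma_n^2\|\mathbf\Sigma_g\|_F^2$ is minimized at $\mathbf W=\mathbf Z=\mathbf U_g=\mathbf I$ by a standard majorization / von Neumann trace argument (it is exactly the perfect-CSI MSE), but the adversary-induced part $\sum_i(2|S_{ii}|g_if_i|\delta_i|+g_i^2f_i^2|\delta_i|^2)$ depends on the whole matrix $\mathbf S$, so one must rule out a trade-off in which a nontrivial mismatch shrinks the diagonal of $\mathbf S$ at the expense of its off-diagonal. I would close this by proving a single strengthened majorization inequality that lower-bounds $L$ directly in terms of $\{f_i\},\{g_i\},\{\gamma_i\}$, with equality iff the mismatches are trivial and the entries are consistently ordered. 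A cleaner alternative that sidesteps the rotation bookkeeping is a saddle-point argument: verify that the diagonal $\mathbf E^\star$ above is worst-case for the channel-diagonalizing $(\mathbf F^\star,\mathbf G^\star)$, and that $(\mathbf F^\star,\mathbf G^\star)$ is MSE-optimal for the \emph{fixed} channel $\tilde{\mathbf H}+\mathbf E^\star=\mathbf U_h(\mathbf\Sigma_h+\mathbf\Delta_D)\mathbf V_h^H$ --- which is immediate from the perfect-CSI result \cite{perfect_mmse,Joint_Tx_Rx}, since that channel shares its singular vectors with $\tilde{\mathbf H}$; these two facts place the optimum at $(\mathbf F^\star,\mathbf G^\star,\mathbf E^\star)$ via $\min\max=\max\min$, provided one also checks that the scalar problem of Section~\ref{sec:scalar} has no duality gap.
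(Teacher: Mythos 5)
Your closing ``cleaner alternative'' is in fact the route the paper takes; your primary route is where the genuine gap lies. The strengthened majorization inequality you promise --- lower-bounding $L$ over the residual unitaries $\mathbf W,\mathbf Z,\mathbf U_g$ while the adversary-induced term couples to the whole matrix $\mathbf S$ --- is essentially the entire difficulty of the theorem, and your sketch gives no mechanism for proving it; you yourself note the possible trade-off between shrinking the diagonal of $\mathbf S$ and inflating its off-diagonal, and that is exactly what remains open. In addition, restricting $\mathbf E'$ to diagonal form only lower-bounds the inner maximum, so even with trivial mismatches you would still owe a proof that the worst case is \emph{attained} at a diagonal perturbation; your ``mass transfer'' argument for this is informal and needs ordering and sign/cross-term control. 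The paper sidesteps all of this: after the same unitary normalization $\mathbf F'=\mathbf V_h^H\mathbf F$, $\mathbf G'=\mathbf G\mathbf U_h$, $\mathbf \Delta=\mathbf U_h^H\mathbf E\mathbf V_h$, it never optimizes over residual rotations. It restricts to diagonal $(\mathbf F',\mathbf G')$, rewrites the inner maximization as the trust-region problem \eqref{equ:transform}, and invokes the exact global-optimality conditions \eqref{equ:satis1}--\eqref{equ:satis3} of \cite{JW_Pre-fixed}; because $\mathbf B$ and $\mathbf C$ are diagonal these conditions admit a diagonal solution $\hat{\mathbf\Delta}$, which gives \eqref{equ:robb7} rigorously rather than by rearrangement.

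On the saddle-point route itself, two corrections. First, once the two facts are verified for a single triple --- the diagonal $\mathbf E^\star$ is a worst case for $(\mathbf F^\star,\mathbf G^\star)$, and $(\mathbf F^\star,\mathbf G^\star)$ is a \emph{global} best response to the fixed channel $\tilde{\mathbf H}+\mathbf E^\star$ --- no minimax-equality or duality-gap theorem is needed: the chain of inequalities \eqref{equ:last} yields min-max optimality directly, so your final proviso is superfluous in that form. Second, the two facts must hold simultaneously for one triple, and your sketch does not say where it comes from; the paper constructs it as the optimizer $(\hat{\mathbf\Sigma}_f^\sharp,\hat{\mathbf\Sigma}_g^\sharp,\hat{\mathbf\Delta}_D^\sharp)$ of the restricted diagonal min-max problem \eqref{equ:becomes}. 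Relatedly, appealing to \cite{perfect_mmse} for the best-response fact only gives the \emph{structural} statement that the optimum for the diagonal channel $\mathbf\Sigma_h+\mathbf\Delta_D^\sharp$ is diagonal; one still needs that the optimal diagonal loading equals the min-max loading $(\hat{\mathbf\Sigma}_f^\sharp,\hat{\mathbf\Sigma}_g^\sharp)$, i.e.\ a saddle property of the scalar problem \eqref{equ:shall2}. You defer this to a ``check''; the paper, to be fair, also asserts it with minimal justification, but any complete write-up along either route must supply that step explicitly.
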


\begin{proof}
We write the first additive term of the objective function of \eqref{equ:robmse} as
\ben\label{equ:kun}
\|\mathbf G(\hat{\mathbf H}+\mathbf E)\mathbf F-\mathbf I\|_F^2
&\overset{(a)}=&\|\mathbf G\mathbf U_h(\mathbf \Sigma_h+\mathbf \Delta)\mathbf V_h^H\mathbf F-\mathbf I\|_F^2\nonumber\\
&\overset{(b)}=&\|\mathbf G'(\mathbf \Sigma_h+\mathbf \Delta)\mathbf F'-\mathbf I\|_F^2\nonumber,
\enn
where in $(a)$ we have defined $\mathbf \Delta\triangleq \mathbf U_h^H\mathbf E\mathbf V_h$.  By the unitary-invariant property of $\|\cdot\|_F$, we can see that $\mathbf \Delta$ still satisfies $\|\mathbf \Delta\|_F\leq \varepsilon$. In $(b)$ we have defined $\mathbf G'\triangleq\mathbf G\mathbf U_h$ and $\mathbf F'\triangleq\mathbf V_h^H\mathbf F$.
Now problem \eqref{equ:robmse} can be rewritten as
\ben\label{equ:robmse2}
\min_{\mathbf G', \mathbf F'}\max_{\|\mathbf \Delta\|_F\leq \varepsilon }&&\text{MSE}(\mathbf F', \mathbf G', \mathbf \Delta)\nonumber\\&&\triangleq\|\mathbf G'(\mathbf \Sigma_{h}+\mathbf \Delta)\mathbf F'
-\mathbf I\|_F^2+\sigma_{d}^2\|\mathbf G'\|_F^2,\nonumber\\
\text{s.t.}&&\text{tr}(\mathbf F'\mathbf F'^H)\leq P,
\enn
which is an optimization problem with respect to $\mathbf F'$ and $\mathbf G'$.
To proceed, we first discuss a particular case when $(\mathbf F', \mathbf G')=([\hat{\mathbf \Sigma}_{f},\mathbf 0]^T, [\hat{\mathbf \Sigma}_g, \mathbf 0])$.
Then problem \eqref{equ:robmse2} becomes
\ben\label{equ:becomes}
\min_{\hat{\mathbf \Sigma}_g, \hat{\mathbf \Sigma}_{f}}\max_{\|\hat{\mathbf \Delta}\|_F\leq \varepsilon }&&\|\hat{\mathbf \Sigma}_g(\hat{\mathbf \Sigma}_{h}+\hat{\mathbf \Delta})\hat{\mathbf \Sigma}_{f}-\mathbf I\|_F^2+\sigma_{d}^2\|\hat{\mathbf \Sigma}_g\|_F^2,\nonumber\\
\text{s.t.}&&\text{tr}(\hat{\mathbf \Sigma}_{f}\hat{\mathbf \Sigma}_{f}^H)\leq P,
\enn
where $\hat{\mathbf \Delta}$ is the upper left $L\times L$ submatrix of $\mathbf \Delta$.
We will then show that there exist an optimal $\hat{\mathbf \Delta}$ in \eqref{equ:becomes} that is diagonal.

After some matrix manipulations and noticing the fact that the maximization of a convex function is achieved on the boundary \cite{JW_Pre-fixed}, the inner maximization of problem  \eqref{equ:becomes} can be transformed into the following problem
\ben\label{equ:transform}
\min_{\|{\bm \delta}\|= \varepsilon}{\bm \delta}^H(-\mathbf B^T\otimes \mathbf C) {\bm \delta}-2\mathfrak{R}\{\mathbf d^H{\bm \delta}\},
\enn
where ${\bm \delta}\triangleq \text{vec}(\hat{\mathbf \Delta})$,
$\mathbf C\triangleq \hat{\mathbf \Sigma}_g\hat{\mathbf \Sigma}_g^H$, $\mathbf B\triangleq \hat{\mathbf \Sigma}_{f}\hat{\mathbf \Sigma}_{f}^H$, and $\mathbf d\triangleq \text{vec}(\hat{\mathbf \Sigma}_g^H(\hat{\mathbf \Sigma}_g\hat{\mathbf \Sigma}_{h}\hat{\mathbf \Sigma}_{f}-\mathbf I)\hat{\mathbf \Sigma}_{f}^H)$.

By the result in \cite{JW_Pre-fixed}, ${\bm \delta}$ is a global minimizer of \eqref{equ:transform} if and only if there exists an $\omega$ such that
\ben
(-\mathbf B^T\otimes \mathbf C+\omega \mathbf I){\bm \delta}=\mathbf d, -\mathbf B^T\otimes \mathbf C+\omega\mathbf I\succeq \mathbf 0, \|{\bm \delta}\|=\varepsilon,\nonumber
\enn
which is equivalent to
\ben
\omega \hat{\mathbf \Delta}-\mathbf C \hat{\mathbf \Delta}\mathbf B&=&\hat{\mathbf \Sigma}_g^H(\hat{\mathbf \Sigma}_g\hat{\mathbf \Sigma}_{h}\hat{\mathbf \Sigma}_{f}-\mathbf I)\hat{\mathbf \Sigma}_{f}^H,\label{equ:satis1}\\
\text{tr}(\hat{\mathbf \Delta}\hat{\mathbf \Delta}^H)&=&\varepsilon^2, \label{equ:satis2}\\
\omega&\geq& \lambda_{\max}(\mathbf B^T\otimes \mathbf C). \label{equ:satis3}
\enn

Since both $\mathbf C$ and $\mathbf B$ are diagonal, \eqref{equ:satis1}-\eqref{equ:satis3} tells us that for any given $\hat{\mathbf \Sigma}_f$ and $\hat{\mathbf \Sigma}_g$,
there exists an optimal $\hat{\mathbf \Delta}$ that is diagonal.
Denote the optimal solution of \eqref{equ:becomes} as $(\hat{\mathbf \Sigma}_{f}^\sharp, \hat{\mathbf \Sigma}_g^\sharp, \hat{\mathbf \Delta}^\sharp_{D})$ with $\hat{\mathbf \Delta}^\sharp_{D}$ being diagonal. To facilitate the analysis, we further define $\mathbf F'^\sharp\triangleq[\hat{\mathbf \Sigma}_{f}^\sharp, \mathbf 0]^T$, $\mathbf G'^\sharp\triangleq [\hat{\mathbf \Sigma}_{g}^\sharp, \mathbf 0]$ and $\mathbf \Delta_D^\sharp\triangleq \text{diag}\{\hat{\mathbf \Delta}^\sharp_{D}, \mathbf 0\}$.  Then by the  above definitions and discussions, we have
\ben\label{equ:robb7}
\max_{\|\hat{\mathbf \Delta}\|_F\leq \varepsilon }\text{MSE}(\mathbf F'^\sharp, \mathbf G'^\sharp, \mathbf \Delta)
=\text{MSE}(\mathbf F'^\sharp, \mathbf G'^\sharp, \mathbf \Delta^\sharp_D).
\enn

Now we discuss another particular situation when $\mathbf \Delta=\mathbf \Delta^\sharp_D$. Then problem~\eqref{equ:robmse2} becomes
\ben\label{equ:robmse4}
\min_{\mathbf F', \mathbf G'}&&\|\mathbf G'(\mathbf \Sigma_{h}+\mathbf \Delta^\sharp_D)\mathbf F' -\mathbf I\|_F^2
+\sigma_{d}^2\|\mathbf G'\|_F^2,\nonumber\\
\text{s.t.}&&\text{tr}(\mathbf F'\mathbf F'^H)\leq P.
\enn
which has been discussed in \cite{MD_MSE, perfect_mmse,NK_MMSE_Selection}, and the optimal solution is given by
\ben
(\mathbf F', \mathbf G')=([\hat{\mathbf \Sigma}_{f}, \mathbf 0]^T, [\hat{\mathbf \Sigma}_g, \mathbf 0]).\label{equ:directly}
\enn
Substituting \eqref{equ:directly} into problem \eqref{equ:robmse4}, we have
\ben\label{equ:robmselo}
\min_{\hat{\mathbf \Sigma}_f, \hat{\mathbf \Sigma}_g}&&\|\hat{\mathbf \Sigma}_g(\hat{\mathbf \Sigma}_{h}+\hat{\mathbf \Delta}_D^\sharp)\hat{\mathbf \Sigma}_f -\mathbf I\|_F^2
+\sigma_{d}^2\|\hat{\mathbf \Sigma}_g\|_F^2,\nonumber\\
\text{s.t.}&&\text{tr}(\hat{\mathbf \Sigma}_f\hat{\mathbf \Sigma}_f^H)\leq P.
\enn
Remember that the optimal solution of \eqref{equ:becomes} is denoted by  $(\hat{\mathbf \Sigma}_{f}^\sharp, \hat{\mathbf \Sigma}_g^\sharp, \hat{\mathbf \Delta}^\sharp_{D})$. Then it is easy to know that the optimal solution of  \eqref{equ:robmselo} is given by  $(\hat{\mathbf \Sigma}_{f}^\sharp, \hat{\mathbf \Sigma}_g^\sharp)$, which means that when the channel uncertainty is $\mathbf \Delta^\sharp_D=\text{diag}\{\hat{\mathbf \Delta}^\sharp_{D}, \mathbf 0\}$, the optimal $(\mathbf F', \mathbf G')$ of problem \eqref{equ:robmse4} is given by $([\hat{\mathbf \Sigma}^\sharp_{f},\mathbf 0]^T, [\hat{\mathbf \Sigma}^\sharp_g, \mathbf 0])$, or we have
\ben\label{equ:com1}
\text{MSE}(\mathbf F', \mathbf G', \mathbf \Delta^\sharp_D)
\geq\text{MSE}(\mathbf F'^\sharp, \mathbf G'^\sharp, \mathbf \Delta^\sharp_D).
\enn

We will next show that from these two special cases given in \eqref{equ:robb7} and \eqref{equ:com1}, the joint optimal structure of $\mathbf F'$ and $\mathbf G'$ can be obtained. This technique has also been used in \cite{thywork, HS_Worst_MR, HS_MMSE}, and is detailed as follows
\begin{multline}
\max_{\|\mathbf \Delta\|_F\leq \varepsilon }\text{MSE}(\mathbf F', \mathbf G', \mathbf \Delta)
\overset{(a)}\geq\text{MSE}(\mathbf F', \mathbf G', \mathbf \Delta^\sharp_D)\nonumber\\
\overset{(b)}\geq\text{MSE}(\mathbf F'^\sharp, \mathbf G'^\sharp,  \mathbf \Delta^\sharp_D)
\overset{(c)}=\max_{\|\mathbf \Delta\|_F\leq \varepsilon }\text{MSE}(\mathbf F'^\sharp, \mathbf G'^\sharp, \mathbf \Delta)\yesnumber\label{equ:last}
\end{multline}
where $(a)$ is due to the fact that $\mathbf \Delta_D^\sharp$ is only a particular channel, $(b)$ is due to \eqref{equ:com1} and $(c)$ is due to \eqref{equ:robb7}. Inequality \eqref{equ:last} shows that the optimal $\mathbf F'$ and $\mathbf G'$ must be given by $(\mathbf F'^\sharp, \mathbf G'^\sharp)$. The proof is completed.
\end{proof}
\begin{remark}
Theorem~\ref{theorem:robmse} provides some interesting insights into the robust optimal transceiver design, showing that its optimal structure is given by $\mathbf V_f=\mathbf U_g$, $\mathbf U_f=\mathbf V_h$ and $\mathbf V_g=\mathbf U_h$, which diagonalizes the MIMO channel into eigen subchannels, and is consistent with the results under perfect and stochastic CSI  assumptions. Therefore, the answer to question $\mathcal Q1$ is yes! Note that problem~\eqref{equ:robmse} was also considered in \cite{JW_2011}, where only partial results of Theorem~\ref{theorem:robmse} was obtained. That is, by assuming $\mathbf V_g=\mathbf U_h$, then  $\mathbf V_f=\mathbf U_g$ and $\mathbf U_f=\mathbf V_h$ were proved to be optimal; on the other hand, given $\mathbf U_f=\mathbf V_h$, then $\mathbf V_f=\mathbf U_g$ and $\mathbf V_g=\mathbf U_h$ were proved to be optimal.
\end{remark}

\section{Robust Global Optimal Design Based On Scalar Optimization}\label{sec:scalar}
Based on Theorem~\ref{theorem:robmse}, we know that the optimal solution of
problem~\eqref{equ:robmse} is determined by problem~\eqref{equ:becomes}, where $\hat{\mathbf \Delta }$ is diagonal. Denote $\hat{\mathbf \Delta }\triangleq \text{diag}\{x_1, \cdots, x_L\}$, $\mathbf f\triangleq [f_1, \cdots, f_L]^T$ and $\mathbf g\triangleq [g_1, \cdots, g_L]^T$, then problem~\eqref{equ:becomes} is equivalent to
\ben
\min_{\mathbf f, \mathbf g}\max_{\sum_{i=1}^Lx_i^2\leq \varepsilon^2}&&\sum_{i=1}^L(f_ig_i(\gamma_i+x_i)-1)^2+\sigma_n^2\sum_{i=1}^Lg_i^2\nonumber\\
\text{s.t.}&&\sum_{i=1}^Lf_i^2\leq P.\label{equ:shall2}
\enn

Introducing a slack variable $t$, problem \eqref{equ:shall2} can be converted to
\begin{subequations}\label{equ:pen}
\ben
\min_{\mathbf f, \mathbf g, t}&&t+\sigma_n^2\sum_{i=1}^Lg_i^2\\
\text{s.t.}&&\sum_{i=1}^L(f_ig_i(\gamma_i+x_i)-1)^2\leq t, \sum_{i=1}^Lx_i^2\leq \varepsilon^2\label{equ:heroa}\\
&&\sum_{i=1}^Lf_i^2\leq P.
\enn
\end{subequations}

Generally speaking, it is difficult to derive a closed-form solution of \eqref{equ:pen}. Thus we will solve it in numerical results.
Let ${\bm\eta}\triangleq[f_1g_1\gamma_1-1, \cdots,f_Lg_L\gamma_L-1]^T$, $\mathbf \Gamma\triangleq \text{diag}\{f_1g_1 , \cdots,f_Lg_L\}$, and $\mathbf x\triangleq [x_1, \cdots, x_L]^T$.
Constraint \eqref{equ:heroa} can be rewritten as
\ben\label{equ:add1}
\|{\bm\eta}+\mathbf {\Gamma}\mathbf x\|_2^2\leq t, \|\mathbf x\|_2\leq \varepsilon.
\enn
Following the similar lines in \cite{JW_2011,JW_Pre-fixed},  one can transform \eqref{equ:add1} into
\ben\label{equ:schur}
\begin{bmatrix}t-\mu&&{\bm\eta}^H && \mathbf 0\\{\bm\eta}&&\mathbf I&&\varepsilon\mathbf \Gamma\\\mathbf 0&&\varepsilon\mathbf \Gamma^H&&\mu\mathbf I\end{bmatrix}\succeq \mathbf 0, \exists \mu\geq 0.
\enn

By applying Schur's Complement \cite{convex}, \eqref{equ:schur} is equivalent to the following constraint
\begin{multline}
\begin{bmatrix}t-\mu && {\bm \eta}^H \\{\bm \eta} && \mathbf I\end{bmatrix}-\frac{1}{\mu}\begin{bmatrix}\mathbf 0\\ \varepsilon\mathbf \Gamma\end{bmatrix}\begin{bmatrix}\mathbf 0&& \varepsilon\mathbf \Gamma^H\end{bmatrix}\nonumber\\
=\begin{bmatrix}t-\mu && {\bm \eta}^H \\{\bm \eta} && \mathbf I-\frac{1}{\mu}\varepsilon^2\mathbf \Gamma\mathbf \Gamma^H\end{bmatrix}\succeq \mathbf 0.\yesnumber\label{equ:schuragain}
\end{multline}
Using Schur's Complement again, \eqref{equ:schuragain} can be written as
\ben
t-\mu-{\bm \eta}^H(\mathbf I-\frac{1}{\mu}\varepsilon^2\mathbf \Gamma\mathbf \Gamma^H)^{-1}{\bm \eta}\geq \mathbf 0,\nonumber
\enn
or equivalently
\ben\label{equ:combine}
\mu+\sum_{i=1}^L\frac{(f_ig_i\gamma_i-1)^2}{1-\varepsilon^2f_i^2g_i^2/\mu}\leq t.
\enn
Combining \eqref{equ:pen} and \eqref{equ:combine}, we get the following problem
\ben\label{equ:importance}
\min_{\mathbf f, \mathbf g, \mu}&& \sum_{i=1}^L\frac{(\gamma_if_ig_i-1)^2}{1-\varepsilon^2f_i^2g_i^2/\mu}+\mu+\sigma_n^2\sum_{i=1}^Lg^2_i\nonumber\\
\text{s.t.}&&\sum_{i=1}^Lf_i^2\leq P,\,\,\mu\geq \varepsilon^2f_i^2g_i^2,
\enn
where the constraint $\mu\geq \varepsilon^2f_i^2g_i^2$ is implicitly included in the constraint \eqref{equ:schuragain}.


Problem \eqref{equ:importance} is still difficult to deal with. However,
we will next show that by some variable transformations, the global optimal solution of problem~\eqref{equ:importance} can be obtained. Define $\mathbf m\triangleq [m_1, \cdots, m_L]^T$ and $\mathbf n\triangleq [n_1, \cdots, n_L]^T$, where $m_i=f_ig_i$ and $n_i=g_i^2$, for $i=1, \cdots, L$. Then problem~\eqref{equ:importance} becomes
\ben\label{equ:importance2}
\min_{\mathbf m, \mathbf n, \mu}&&\phi_1(\mathbf m, \mathbf n, \mu)\triangleq \sum_{i=1}^L\frac{(\gamma_im_i-1)^2}{1-\varepsilon^2m_i^2/\mu}+\mu+\sigma_n^2\sum_{i=1}^Ln_i\nonumber\\
\text{s.t.}&&\sum_{i=1}^Lm_i^2/n_i\leq P,\,\,\mu\geq \varepsilon^2m_i^2,
\enn
Let $\mathbf s\triangleq [s_1, \cdots, s_L]^T$. We claim that \eqref{equ:importance2} is equivalent to the following problem
\ben\label{equ:jw2}
\min_{\mathbf m, \mathbf n,\mu, \mathbf s} &&\phi_2(\mathbf m, \mathbf n, \mu, \mathbf s)\triangleq \sum_{i=1}^L\frac{(\gamma_im_i-1)^2}{1-s_i}+\mu+\sigma_n^2\sum_{i=1}^Ln_i\nonumber\\
\text{s.t.}&&\sum_{i=1}^Lm_i^2/n_i\leq P, \,\,\varepsilon^2 m_i^2/\mu\leq s_i<1.
\enn

This can be explained as follows. First, suppose that $(\mathbf m^\sharp, \mathbf n^\sharp, \mu^\sharp, \mathbf s^\sharp)$ is the optimal solution of \eqref{equ:jw2}. In view of \eqref{equ:importance2}, it follows that
\begin{multline}
\min_{\mathbf m,\mathbf n,\mu}\phi_1(\mathbf m, \mathbf n, \mu)\leq\phi_1(\mathbf m^\sharp, \mathbf n^\sharp, \mu^\sharp)\\
\overset{(a)}\leq \phi_2(\mathbf m^\sharp, \mathbf n^\sharp, \mu^\sharp, \mathbf s^\sharp)=\min\phi_2(\mathbf m, \mathbf n, \mu, \mathbf s),\nonumber
\end{multline}
where in $(a)$ we have used the constraint $\varepsilon^2m_i^{\sharp 2}/\mu^{\sharp} \leq s_i^{\sharp}$.
Then we know that $\min\phi_1(\mathbf m, \mathbf n, \mu)\leq\min\phi_2(\mathbf m, \mathbf n, \mu, \mathbf s)$.
On the other hand, for any feasible $(\mathbf m, \mathbf n, \mu)$ of \eqref{equ:importance2}, we can always find some $\mathbf s$, which makes the equality holds in  \eqref{equ:jw2}. This means that, it is also feasible to \eqref{equ:jw2}. Then we must have  $\min\phi_1(\mathbf m, \mathbf n, \mu)\geq\min\phi_2(\mathbf m, \mathbf n, \mu, \mathbf s)$.
Thereby we must have $\min\phi_1(\mathbf m, \mathbf n, \mu)=\min\phi_2(\mathbf m, \mathbf n, \mu, \mathbf s)$}.

Introducing the slack variable $\mathbf z\triangleq [z_1, \cdots, z_L]^T$, problem \eqref{equ:jw2} can be written as
\ben\label{equ:new3}
\min_{\mathbf m, \mathbf n, \mu, \mathbf s, \mathbf z} &&\sum_{i=1}^Lz_i+\mu+\sigma_n^2\sum_{i=1}^Ln_i\nonumber\\
\text{s.t.}&&\begin{bmatrix}z_i && \gamma_im_i-1 \\\gamma_im_i-1&& 1-s_i\end{bmatrix}\succeq \mathbf 0, \,\, i=1, \cdots, L, \nonumber\\
&&\begin{bmatrix}\mu && \varepsilon m_i\\\varepsilon m_i&&s_i\end{bmatrix}\succeq \mathbf 0,\,\,  s_i<1, \,\,i=1, \cdots, L,\nonumber\\
&&\sum_{i=1}^L\frac{m_i^2}{n_i}\leq P.
\enn

Problem \eqref{equ:new3} is a semidefinite programming (SDP) problem, which can be efficiently solved by the MATLAB package tools such as CVX \cite{CVX}. Then the optimal $f_i$ and $g_i$ are obtained by $f_i=m_i/\sqrt{n_i}$ and $g_i=\sqrt{n_i}$.

\begin{remark}
By fixing $\mathbf f$ or $\mathbf g$, problem~\eqref{equ:importance}  becomes problem ($6$) or ($7$) in \cite{JW_2011}, where they were proved to be convex and can be optimally solved, respectively. This process is repeated until convergence.
However, the solution of this alternating
optimization based method depends on the initial point $\mathbf f^{(0)}$, and may not be optimal if problem~\eqref{equ:importance} has local minimal. Thus it is natural to ask question $\mathcal Q2$, does this method converge to the global optimal solution?  Unfortunately, this is not guaranteed. As will be seen in section~\ref{sec:simu}, different initial point $\mathbf f^{(0)}$ will lead to different results,  and may also incur some performance loss. Therefore, our answer to question $\mathcal Q2$ is \emph{not always, and it depends on the initial point}.
\end{remark}

\section{Discussions and Simulations}\label{sec:simu}
In this section, we first provide the complexity comparison between our joint optimal design and  the robust design in \cite{JW_2011}, and then  give  numerical results to compare the two robust designs.

The channel fading is modeled as Rayleigh fading, and each channel entry satisfies the complex normal distribution $\mathcal C\mathcal N(0,1)$. The noise is assumed to be zero-mean unit variance complex Gaussian random variables. In our simulations, we set $M=N=L$ and vary $\varepsilon$ through the normalized parameter $\rho\in [0, 1)$, i.e., $\varepsilon^2=\rho\|\tilde {\mathbf H}\|_F^2$. Then the larger the $\rho$ is, the poorer the CSI quality will be. All results are averaged over $1000$ channel realizations.
\subsection{Complexity Comparison}
Since we have derived the optimal structure of tranceiver design in Theorem~\ref{theorem:robmse}, the major computing step in our work remains in solving problem~\eqref{equ:new3}.
The complexity for solving the SDP problem~\eqref{equ:new3} is $\mathcal O(L^2)$ per iteration  and the number of iterations typically lies between $5$ and $50$, for an SDP problem \cite{boyd_sdp}. On the other hand, the complexity analysis of the method in \cite{JW_2011} is a little complicated. In their work, when fixing $\mathbf g$, the optimal $\mathbf f$ was obtained by the three-level primal-primal decomposition method, where a close-form solution was given in the lowest level, while the bisection method and the gradient method were applied at the middle and third level, respectively. Similarly, the problem for determining optimal $\mathbf g$ under fixed $\mathbf f$ was also solved in two levels, where a close-form solution was derived at the first level, while the bisection method was used at the second level. It can be seen that it is hard to determine the complexity of each iteration as well as the exact (or even approximate) iteration number in this method. Upon this observation, we resort to the CPU time comparison required by the two methods.

Fig.~\ref{fig:cpu_time_vs_N} shows the average CPU time comparison between our robust optimal method and the robust method in \cite{JW_2011}. We set $P=20$dBW and choose different initial points for the method in \cite{JW_2011}: a) Scheme I: set the initial point $\mathbf f^{(0)}$ with equal elements. b) Scheme II: set $\mathbf f^{(0)}$ as the non-robust solution that takes the nominal channel $\tilde {\mathbf H}$ as the actual channel \cite{perfect_mmse}. c) Scheme III: set $\mathbf f^{(0)}$ as a random variable that satisfies $\|\mathbf f^{(0)}\|_2^2=P$. It can be observed from Fig.~\ref{fig:cpu_time_vs_N} that scheme III is the most time-consuming scheme among all the schemes. Although the time cost by Scheme I and II is similar to that in our our global optimal solution, it increases more rapidly as the number of data streams L goes large.
  From Fig.~\ref{fig:cpu_time_vs_N}, we know that the required CPU time for the method in \cite{JW_2011} tends to be more random in nature, and heavily depends on the initial point. On the other hand, our method is not only efficient but also  has a more stable runtime performance.

\subsection{Numerical Results}
\begin{figure}[!t]
    \centering
    \includegraphics[width=3.5in]{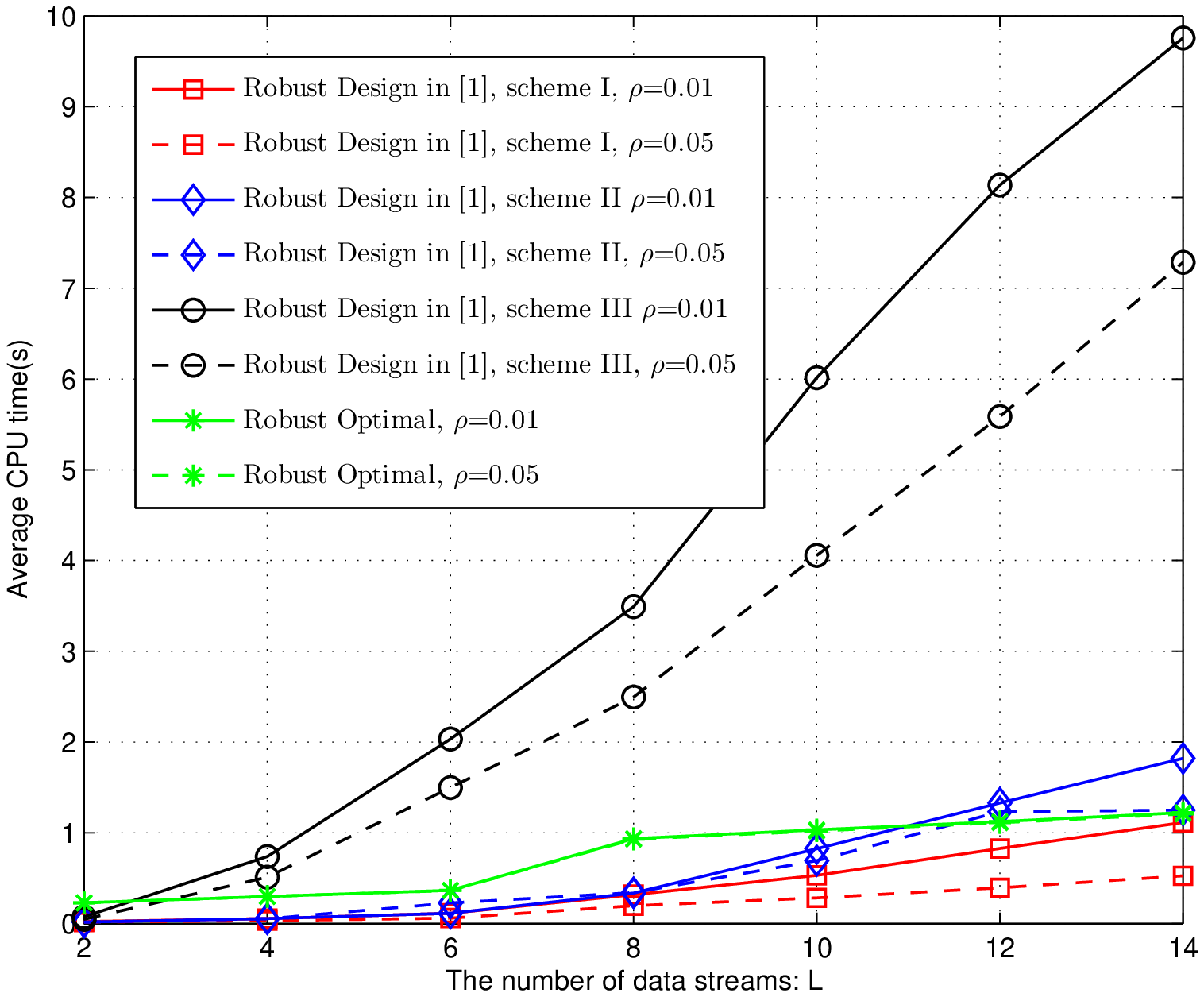}
    \caption{Average CPU time comparison versus different $L$ for $M=N=L$ and $P=20$dBW.}\label{fig:cpu_time_vs_N}
%
    \includegraphics[width=3.5in]{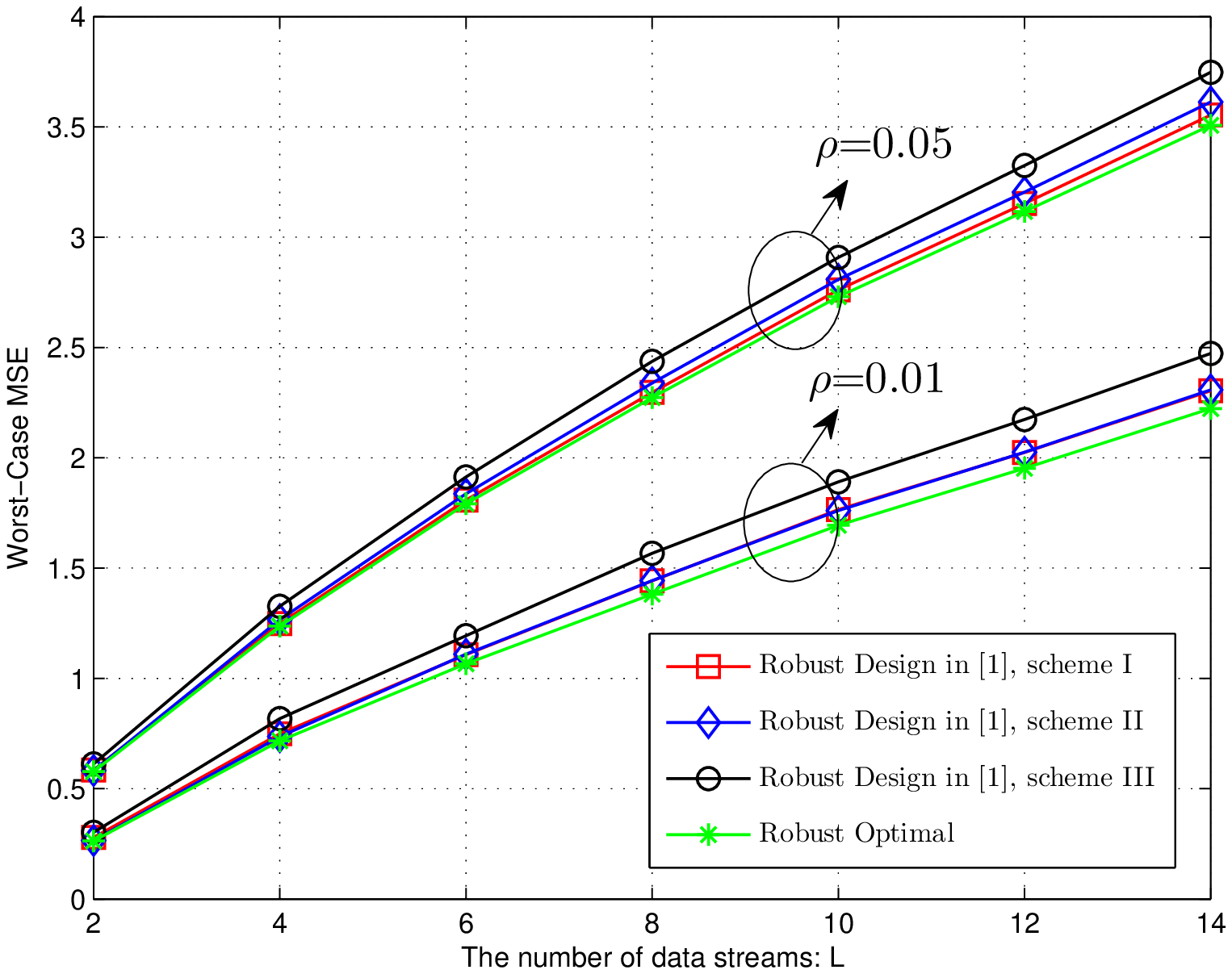}
    \caption{Worst-case MSE versus different $L$ for $M=N=L$ and $P=20$dBW.}\label{fig:mimorho}
\end{figure}
\begin{figure}[!t]
    \centering
    \includegraphics[width=3.5in]{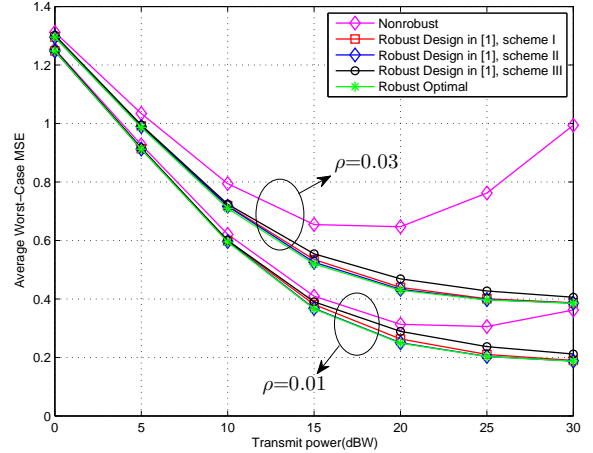}
    \caption{Worst-case MSE versus different transmit power for $M=N=L=2$.}\label{fig:mimosnr}
\end{figure}
We now study the system MSE performance in different scenarios.
In Fig.~\ref{fig:mimorho}, we set the same network configuration as that in Fig.~\ref{fig:cpu_time_vs_N}, and investigate the average worst-case MSE performance versus $L$ of our method and three different schemes in \cite{JW_2011}. As shown in the plot, scheme I and scheme II suffer some marginal performance loss, while scheme III  incurs some apparent loss, which grows even larger when $L$ increases. Hence, the method in \cite{JW_2011} does not always lead to the optimal solution.

As an other example,
Fig.~\ref{fig:mimosnr} depicts the average worst-case MSE performance versus different transmit power with $\rho=0.01$ and $\rho=0.03$. We consider the case when $M=N=L=2$. Fig.~\ref{fig:mimosnr} verifies the superior performance of robust schemes over the non-robust scheme in \cite{perfect_mmse}. Moreover, it can be observed  that scheme II  has an almost optimal performance, while scheme I approximately approaches to the optimal solution. Therefore, when the advanced software
package (such as CVX) is not available, scheme I (or scheme II) can  be viewed as a simple implementation of the global optimal method.

\section{Conclusions and Future Work}
In this paper, we investigate the global optimal transceiver design in the MIMO link under deterministic  CSI uncertainty model. We first prove that
the optimal design of the transmitter-equalizer has a favorable channel-diagonalizing structure. Then we simplify the original problem into a scalar optimization problem, and obtain the global optimal solution via an SDP problem. Simulation results show that our method outperforms the existing schemes.

We only considered the point-to-point MIMO system in this work. However, as pointed out in \cite{EB_color}, the transmission can also  be affected by the multiple access interference (MAI), if the transmit-receive nodes are active over a communication network which employs non-orthogonal multiplexing. In this case, the  received signal at the destination is contaminated by  the spatially colored Gaussian noise and the robust optimal transceiver design must be reconsidered. Hence, it would be interesting to
address this issue in our future  research.
%


\begin{thebibliography}{99}
\bibitem{JW_2011}
J. Wang and M. Bengtsson, ``Joint optimization of the worst-case robust MMSE MIMO transceiver'', {\em IEEE Signal Process. Letters}, vol.18, no. 5, pp. 295-298,  May, 2011.
\bibitem{wang}
Z. Wang, and Wen Chen, ``Relay beamforming Design with SIC detection for MIMO multi-relay networks
with Imperfect CSI,'' {\em IEEE Trans. Vehicular Technol.}, vol. 62, no. 8, pp. 3774-3785, Oct. 2013.
\bibitem{EB_poisson}
E. Baccarelli, M. Biagi, C. Pelizzoni, and N. Cordeschi, ``Optimal MIMO UWB-IR transceiver for Nakagami-fading and Poisson-arrivals'',  {\em Journal of Communications}, vol. 3, no. 1, pp. 27-40, Jan. 2008.
\bibitem{EB_color}
E. Baccarelli, M. Biagi, C. Pelizzoni, N. Cordeschi, ``Optimized power allocation for multiantenna systems impaired by multiple access interference and imperfect channel estimation'', { \em IEEE Trans. Vehicular Technol.}, vol. 56, no. 5, pp. 3089-3105, Sep. 2007.

\bibitem{perfect_mmse}
H. Sampath, P. Stoica, and A. Paulraj, ``Generalized linear precoder and decoder design for MIMO channels using the weighted MMSE criterion'', {\em IEEE Trans. Commun.}, vol. 49, no. 12, pp. 2198-2206, Dec. 2001.

\bibitem{Joint_Tx_Rx}
D. P. Palomar, J. M. Cioffi, and M. A. Lagunas, ``Joint Tx-Rx beamforming design for multicarrier MIMO channels: A unified framework for convex optimization,'' {\em IEEE Trans. Signal process.}, vol. 51, no. 9, pp. 2381-2401, Sep. 2003.


\bibitem{MD_MSE}
M. Ding and S. D. Blostein, ``MIMO minimum total MSE transceiver design with imperfect CSI at both ends'', {\em IEEE Trans. Signal Process.}, vol. 57, no. 3, pp. 1141-1150, Mar. 2009.


\bibitem{XZ_Pre-fixed}
X. Zhang, D. P. Palomar, and B. Ottersten, ``Statistically robust design of linear MIMO transceivers'', {\em IEEE Trans. Signal Process.}, vol. 56, no. 8, pp. 3678-3689, Aug. 2008.

\bibitem{JW_Pre-fixed}
J. Wang and D. P. Palomar, ``Robust MMSE precoding in MIMO channles with pre-fixed receivers'', {\em IEEE Trans. Signal Process.}, vol. 58, no. 11, pp. 5802-5818, Nov. 2010.

%



%
%
%
%
%

%




\bibitem{NK_MMSE_Selection}
N. Khaled, S. Thoen, and L. Deneire, ``Optimizing the joint transmit and receive MMSE design using mode selection'', {\em IEEE Trans. Commun.}, vol. 53, no. 4, pp. 730-737, Apr. 2005.

\bibitem{HS_Worst_MR}
H. Shen, J. Wang, B. C. Levy, and C. Zhao, ``Robust optimization for amplify-and-forward MIMO relaying from a worst-case perspective,''  {\em IEEE Transactions Signal Process.}, vol. 61, no. 21 pp. 5458-5471, Nov., 2013.
\bibitem{HS_MMSE}
H. Shen, J. Wang, W. Xu, Y. Rong, and C. Zhao, ``A worst-case robust MMSE transceiver design for nonregenerative MIMO relaying'', {\em IEEE Trans. Wireless Commun.}, vol. 13, no. 2, pp. 695-709, Feb. 2014.
\bibitem{thywork}
H. Tang, W. Chen, J. Li, and H. Wan, ``Achieving global optimality for joint source and relay beamforming design in two-hop relay channels,'' {\em IEEE Trans. Vehicular Technol.}, vol. 63, no. 9, pp. 4422-4435, Nov., 2014.

\bibitem{convex}
S. Boyd,  and L. Vandenberghe, ``Convex Optimization,'' Cambridge University Press, 2004.
%
%
\bibitem{CVX}
M. Grant and S. Boyd, CVX' Users' Guide, 2009, [Online], http://cvxr.com/cvx/doc/index.html.
\bibitem{boyd_sdp}
L. Vandenberghe and S. Boyd, ``Semidefinte programming,'' {\em SIAM Rev.}, pp. 49-95, 1996.









%
%
%
%
%
%
%
%
%
%
%
%
%
%


%





%
\end{thebibliography}
\end{document}